\def\modd#1 #2{#1\ ({\rm mod}\ #2)}
\def\fmod#1 #2{#1\ ({\rm mod}\ #2)}
\DeclareMathOperator{\pref}{Pref}
\def\cip{\mathring{R}}
\DeclareMathOperator{\scw}{scw}
\DeclareMathOperator{\sw}{sw}
\newcommand{\Rmnum}[1]{\expandafter\@slowromancap\romannumeral #1@}
\title{Sets Represented as the Length-$n$ Factors of a Word}
\author{Shuo Tan \and Jeffrey Shallit}
\institute{School of Computer Science, University of Waterloo,
Waterloo, ON  N2L 3G1, Canada  \\
\email{ \{s22tan,shallit\}@uwaterloo.ca}
}
\begin{document}

\maketitle

\begin{abstract}
In this paper we consider the following problems:  how many different
subsets of $\Sigma^n$ can
occur as set of all length-$n$ factors of a finite word?
If a subset is representable,
how long a word do we need to represent it?
How many such subsets are represented by words of length $t$?
For the first problem,
we give upper and lower bounds of the form $\alpha^{2^n}$ in the
binary case.
For the second problem, we give a weak upper bound and
some experimental data.
For the third problem, we give a closed-form
formula in the case where $n \leq t < 2n$.
Algorithmic variants of these problems have
previously been studied under the name ``shortest common superstring''.
\end{abstract}

\section{Introduction}

Let $w,x,y,z$ be finite words.
If $w = xyz$, we say that $y$ is a {\it factor} of $w$.
De Bruijn proved~\cite{debruijn1946} the existence of a set of binary
words $(B_n)_{n \geq 1}$ with the property that every binary word of length
$n$ appears as a factor of $B_n$ (and, in fact, appears exactly once in
$B_n$). Here we are thinking of $B_n$ interpreted as a circular word.
For example, consider the case where $n = 2$, where we can take $B_2 =
0011$. Interpreted circularly, the factors of length $2$ of $B_2$ are
$00, 01, 11, 10$, and these factors comprise all the binary words of
length $2$.

However, not every subset of $\{ 0, 1\}^n$ can be represented as the
factors of some finite word. For example, the
set $\{ 00, 11 \}$ cannot equal the set of all factors of any
word $w$ --- interpreted in the ordinary sense or circularly ---
because the set of factors of any $w$ containing both letters
must contain either $01$ or $10$.

This raises the natural question, how many different non-empty subsets $S$
of $\{ 0, 1\}^n$ can be represented as the factors of some
word $w$?  (Note that, unlike \cite{moreno}, we do {\it not\/} insist that each
element of $S$ appear exactly once in $w$.)
We give upper and lower bounds for this
quantity for circular words,
both of the form $\alpha^{2^n}$.  Our upper bound has
$\alpha = \root4\of{10} \doteq 1.78$ while our lower bound has $\alpha
= \sqrt{2} \doteq 1.41$.

If the set of length-$n$ factors of a word $w$ (considered circularly)
equals $S$, we say that $w$ {\it witnesses} $S$.  We study the length
of the shortest witness for subsets of $\{0, 1\}^n$, and give an upper
bound.

Restriction on the length of a witness leads us to another interesting
problem. Let $T(t, n)$ denote the number of subsets of $\{0, 1\}^n$
witnessed by some word of length $t \geq n$. Is there any characterization of
$T(t,n)$? We focus on ordinary (non-circular) words for this question
and derive a closed-form formula for $T(t, n)$ in the case where $n \leq t <
2n$.

{\it Algorithmic} versions of related problems have been widely studied
in the literature under the name ``shortest common superstring''.
For example, Gallant, Maier, and Storer \cite{gallant} proved that
the following decision problem is NP-complete:

\medskip

\noindent Instance:  A set $S$ of words and an integer $K$. \\
\noindent Question:  Is there a word $w$ of length $\leq K$ containing each
word in $S$ (and possibly others) as a factor?

\medskip

However, the combinatorial problems that we study in this paper
seem to be new.

\section{Preliminaries}

Let $\Sigma = \{0, 1\}$ denote the alphabet. Let $F_n(w)$ denote the
set of length-$n$ factors of an ordinary (non-circular) word $w$, and
let
$C_n(w)$ denote the set of length-$n$ factors of $w$ where $w$ is
interpreted circularly.
For example, if $w = 001$, then $F_2(w) = \{00, 01\}$, while if $w =
001$ is interpreted circularly, then $C_2(w) = \{00, 01, 10\}$.

We say that a word $w$ {\it witnesses} (resp., {\it circularly
witnesses}) a subset $S$ of $\Sigma^n$ if $F_n(w) = S$ (resp.,
$C_n(w) = S$). A subset $S$ of $\Sigma^n$ is {\it
representable} (resp., {\it circularly representable}) if there
exists a non-empty word (resp., circular word) that witnesses
$S$.  Let $R_n$ denote the set of all non-empty representable subsets
of $\Sigma^n$, and let $\cip_n$ denote the set of all non-empty circularly
representable subsets of $\Sigma^n$.

Let $\sw(S)$ (resp., $\scw(S)$) denote the length of the
shortest non-circular witness (resp., circular witness) for $S$.
Let $\mu_n$ (resp., $\nu_n$) denote the maximum length of the
shortest non-circular (resp., circular) witness over all
representable subsets of $\Sigma^n$.

A {\it de Bruijn word} $B_n$ of order $n$ over the alphabet $\Sigma$ is
a shortest circular witness for the set $\Sigma^n$.  It is
known~\cite{debruijn1946}  that the length of a de Bruijn word of
order $n$ over $\Sigma$ is $2^n$.

For convenience, we let $w[i]$ denote the $i$'th letter of $w$ and
$w[i .. j]$ denote the factor of $w$ with length $j - i + 1$ that
starts with the $i$'th letter of $w$.  Thus $w = w[1..n]$ where
$n = |w|$.

\section{Bounds on the size of $\cip_n$}\label{sec:bounds}
In this section, we give lower and upper bounds on the size of $\cip_n$, both of which are of the form $\alpha^{2^n}$. Our lower bound has $\alpha = \sqrt{2}$ while our upper bound has $\alpha = \root4\of{10}$. Note that our lower bound also works for the size of $R_n$, since every circularly representable subset is also representable.

\subsection{Lower bound}

Our argument for the lower bound derives from constructing a set of circularly representable subsets.

\begin{proposition}
Let $b_n$ be any de Bruijn word of order $n$. Then
$| C_{n + 1}(b_n) | = 2^n$.
\end{proposition}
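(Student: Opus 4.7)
The plan is to count the length-$(n+1)$ circular factors of $b_n$ by their starting positions and show that the natural map from positions to factors is injective, using the defining property of de Bruijn words.

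First I would observe that since $b_n$ has length $2^n$, when interpreted circularly there are exactly $2^n$ choices of starting position for a length-$(n+1)$ factor, namely $b_n[i..i+n]$ for $i = 1, \ldots, 2^n$ (with indices taken modulo $2^n$ in the usual circular sense). This immediately gives the upper bound $|C_{n+1}(b_n)| \leq 2^n$.

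For the matching lower bound, I would show that all $2^n$ such factors are distinct. Suppose $b_n[i..i+n] = b_n[j..j+n]$ for some $i, j \in \{1, \ldots, 2^n\}$. Restricting to the first $n$ letters, this forces $b_n[i..i+n-1] = b_n[j..j+n-1]$. But by definition of a de Bruijn word of order $n$, every length-$n$ binary word occurs exactly once as a circular factor of $b_n$; in particular, no length-$n$ circular factor occurs at two distinct starting positions. Hence $i = j$, and the $2^n$ factors are pairwise distinct.

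Combining the two bounds gives $|C_{n+1}(b_n)| = 2^n$. There is no real obstacle here: the only subtlety is being careful that ``interpreted circularly, each length-$n$ word appears exactly once'' is exactly the statement needed to conclude injectivity on starting positions, and this is precisely the definition recalled in the Preliminaries.
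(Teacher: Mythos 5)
Your proposal is correct and follows essentially the same argument as the paper: count the $2^n$ circular starting positions and deduce distinctness of the length-$(n+1)$ factors by noting that a repeated one would force its length-$n$ prefix to occur twice, contradicting the defining property of $b_n$. No issues.
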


\begin{proof}
Every de Bruijn word of order $n$ is of length $2^n$; thus there are
$2^n$ length-$(n + 1)$ factors of $b_n$ (considered circularly).
These length-$(n + 1)$ factors are pairwise
distinct, for if $w \in \Sigma^{n + 1}$ appears more than once as a
factor of $b_n$, then $w[1..n]$ appears more than once as a factor of
$b_n$.  However, every length-$n$ factor appears
only once in $b_n$, a contradiction.  Hence $|C_{n + 1}(b_n)| = 2^n$.
\qed
\end{proof}

\begin{lemma}\label{lemma:element}
Given a de Bruijn word $b_n$, let $Y$ denote the set $\Sigma^{n + 1} \backslash C_{n + 1}(b_n)$. For any $y \in Y$, the set $\{y\} \cup C_{n + 1}(b_n)$ is circularly witnessed by a word $w$ for which both the length-$2^n$ prefix and the length-$2^n$ suffix equal $b_n$.
\end{lemma}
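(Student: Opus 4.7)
My plan is to work in the directed de Bruijn graph $G$ with vertex set $\Sigma^n$ and edge set $\Sigma^{n+1}$, where the edge $e$ points from $e[1..n]$ to $e[2..n+1]$. Any closed walk in $G$ corresponds to a circular word by recording, at each time step, the first letter of the node visited; under this dictionary the length-$(n+1)$ circular factors of the word are exactly the edges traversed. In this language $b_n$ is a Hamiltonian cycle $H$ in $G$ whose edge set is $C_{n+1}(b_n)$, and $y \in Y$ is an edge of $G$ that is not in $E(H)$.

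Set $v^{(0)} := b_n[1..n]$ and enumerate the vertices of $H$ starting at $v^{(0)}$ as $v^{(0)}, v^{(1)}, \ldots, v^{(2^n-1)}$. Let $p, q \in \{0, 1, \ldots, 2^n - 1\}$ be the unique indices with $y[1..n] = v^{(p)}$ and $y[2..n+1] = v^{(q)}$. I would take $w$ to be the circular word corresponding to the following closed walk $W$, based at $v^{(0)}$: (1) traverse $H$ once, returning to $v^{(0)}$; (2) walk $p$ more edges of $H$ forward to reach $v^{(p)}$; (3) take the edge $y$ to $v^{(q)}$; (4) walk $(2^n - q) \bmod 2^n$ more edges of $H$ forward back to $v^{(0)}$; (5) traverse $H$ once more from $v^{(0)}$ to $v^{(0)}$.

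Every edge of $W$ lies in $E(H) \cup \{y\}$; step (1) already uses every edge of $E(H)$, and step (3) uses $y$. Hence the set of edges traversed by $W$---equivalently $C_{n+1}(w)$---is exactly $C_{n+1}(b_n) \cup \{y\}$. For the prefix/suffix condition, the key observation is that traversing $H$ starting from $v^{(0)} = b_n[1..n]$ reads off the letters of $b_n$ one at a time. So, choosing the linear reading of $w$ to begin at the start of step (1), step (1) writes $b_n$ into positions $1, \ldots, 2^n$, and step (5) writes $b_n$ into positions $|w| - 2^n + 1, \ldots, |w|$.

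What remains is careful bookkeeping: setting up the walk-to-word dictionary precisely and verifying the indexing so that steps (1) and (5) really align with the prefix and suffix of the linear reading of $w$. The only case split worth noting is the degenerate $q = 0$, where step (4) should be vacuous; the convention $(2^n - 0) \bmod 2^n = 0$ handles this uniformly (and similarly $p = 0$ makes step (2) vacuous). I expect this alignment check, rather than any deeper combinatorial idea, to be the fiddliest part of the formal write-up.
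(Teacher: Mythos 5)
Your proposal is correct, but it travels a genuinely different route from the paper's. The paper argues entirely at the level of words: it sets $t = b_nb_nb_nb_n$, locates the first occurrence of $y[1..n]$ and the last occurrence of $y[2..n+1]$ in $t$, verifies that these occurrences do not overlap, and splices out the middle segment so that the single new factor $y$ is created, padding with $b_nb_n$ on both ends; the claim that no extraneous length-$(n+1)$ factors arise is then left to inspection. Your argument instead works in the de Bruijn graph, viewing $b_n$ as a Hamiltonian cycle $H$ and $y$ as an edge outside $E(H)$, and builds a closed walk using only edges of $E(H)\cup\{y\}$ that starts and ends with a full traversal of $H$. Because the circular length-$(n+1)$ factors of the word read off a closed walk are exactly the edges traversed, both directions of the witnessing claim (every element of $\{y\}\cup C_{n+1}(b_n)$ occurs, and nothing else occurs) come for free, and no overlap analysis is needed at all. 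The bookkeeping you defer is indeed routine: with your conventions the walk spells the linear word $b_n\,b_n[1..p+1]\,b_n[q+1..2^n]\,b_n$ (with the middle blocks degenerating appropriately when $p=0$ or $q=0$), whose length-$2^n$ prefix and suffix are both $b_n$, exactly as the lemma requires for its later use in the concatenation argument of Proposition~\ref{pro:set}. As a minor bonus your witness is also somewhat shorter than the paper's, though neither proof needs that.
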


\begin{proof}
We construct such a witness for $\{y\} \cup C_{n + 1}(b_n)$.

Let $t = b_nb_nb_nb_n$. Let $y_1 = y[1..n]$ and $y_2 = y[2..n + 1]$.
Let $i_1$ denote the index of the first occurrence of $y_1$ in $t$;
namely, the index $i_1$ is the minimal integer such that $y_1 =
t[i_1..i_1 + n - 1]$. Let $i_2$ denote the index of the last occurrence
of $y_2$ in $t$; namely, the index $i_2$ is the maximal integer such
that $y_2 = t[i_2..i_2 + n - 1]$.

We argue that the first occurrence of $y_1$ does not overlap the last
occurrence of $y_2$. We have $i_1 \leq 2^n$, since every possible
factor of length $n$ appears in the circular word $b_n$. Similarly, we
obtain $i_2 > 3 \cdot 2^n - n$. Thus we have
$$i_1 + n - 1 - i_2 < -2 \cdot 2^n + 2n - 1 < 0,$$
and hence the first occurrence of $y_1$ does not
overlap the last occurrence of $y_2$.

Now consider the circular word
$$t_y = b_n b_n t[1..i_1 - 1] t[i_1..i_1 + n - 1] t[i_2 + n - 1]
t[i_2 +n..2^{n + 2}] b_n b_n.$$
We argue that $t_y$
is a witness for $\{y\} \cup C_{n + 1}(b_n)$. For one direction, every
element of $\{y\} \cup C_{n + 1}(b_n)$ appears as a length-$(n + 1)$
factor of $t_y$. This is a consequence of the following two facts:
\begin{enumerate}
\item{$b_nb_n$ witnesses $C_{n + 1}(b_n)$.}
\item{$t[i_1..i_1 + n - 1]t[i_2 + n - 1] = y[1..n]y[n + 1] = y$.}
\end{enumerate}
For the other direction, we can see that all factors of length $n + 1$ in $t_y$ are elements of $\{y\} \cup C_{n + 1}(b_n)$ by inspection.
Note that the length-$2^n$ prefix and the length-$2^n$ suffix of $t_y$ both equal $b_n$. Hence we conclude that there exists a word for which the prefix and the suffix equal $b_n$ and this circular word circularly witnesses $\{y\} \cup C_{n + 1}(b_n)$.
\qed
\end{proof}


\begin{example}
Let $n = 2$. One of the de Bruijn words of order $2$ is $b_2 = 0011$. We have $C_3(b_2) = \{001, 011, 110, 100\}$. Thus $Y = \{000, 010, 101, 111\}$. Let $y = 010$.
The following circular word demonstrates
that the set $\{y\} \cup C_{n + 1}(b_n)$ is representable:
$$t_{010} = (\underbrace{00110011}_{b_2b_2})(\underbrace{0}_{t[1..i_1 -
1]})(\underbrace{01}_{t[i_1 .. i_1 + n - 1] =
y_1})(\underbrace{0}_{t[i_2 + n - 1]})(\underbrace{011}_{t[i_2 +n..2^{n
+ 2}]})(\underbrace{00110011}_{b_2b_2}).$$
\end{example}

\begin{proposition}\label{pro:set}
Given a de Bruijn word $b_n$, let $Y$ denote the set $\Sigma^{n + 1} \backslash C_{n + 1}(b_n)$. For any subset $S \subseteq Y$, the set $S \cup C_{n + 1}(b_n)$ is a circularly representable subset of $\Sigma^{n + 1}$.
\end{proposition}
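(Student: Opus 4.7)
The plan is to construct a single circular witness for $S \cup C_{n+1}(b_n)$ by overlapping the individual witnesses supplied by Lemma~\ref{lemma:element}, using copies of $b_n$ as glue so that no unwanted factors are created at the joins.

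If $S = \emptyset$, the word $b_n$ itself is a circular witness. Otherwise, enumerate $S = \{y_1, \ldots, y_k\}$ and, via Lemma~\ref{lemma:element}, obtain for each $y_i$ a word $w_{y_i}$ that circularly witnesses $\{y_i\} \cup C_{n+1}(b_n)$ and whose length-$2^n$ prefix and suffix both equal $b_n$; so $w_{y_i} = b_n v_{y_i} b_n$ for some (possibly empty) middle part $v_{y_i}$. I would then form the candidate witness
$$W \;=\; b_n\, v_{y_1}\, b_n\, v_{y_2}\, b_n \,\cdots\, v_{y_k}\, b_n,$$
which can be viewed as chaining the $w_{y_i}$ together by letting consecutive pieces share a full $b_n$ block.

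Next I would verify the two inclusions. For $S \cup C_{n+1}(b_n) \subseteq C_{n+1}(W)$, each $w_{y_i}$ occurs as a contiguous factor of $W$, so any length-$(n+1)$ factor of $w_{y_i}$ appears as a linear factor of $W$; the genuinely circular factors of a $w_{y_i}$ (those spanning its end-to-start boundary) sit inside a block of the form $b_n b_n$, and such a block appears either inside $W$ (at any glue point, if the adjacent $v$ is empty) or as the wrap-around of $W$ itself, since $W$ begins and ends with $b_n$. For the reverse inclusion $C_{n+1}(W) \subseteq S \cup C_{n+1}(b_n)$, the key locality claim is that every length-$(n+1)$ factor of $W$ lies inside some $w_{y_i}$; granting this, such a factor lies in $C_{n+1}(w_{y_i}) = \{y_i\} \cup C_{n+1}(b_n)$, and we are done.

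The only step that really requires care is the locality claim, so I would put the work there. It rests on the simple quantitative fact $|b_n| = 2^n \geq n+1$: a length-$(n+1)$ factor that starts within (or before) $v_{y_i}$ cannot reach past the following $b_n$-block and therefore lives inside $w_{y_i}$, while a factor starting within that $b_n$-block cannot reach past the next $b_n$-block and therefore lives inside $w_{y_{i+1}}$; the wrap-around factors of $W$ lie inside $b_n b_n$ and are already accounted for in $C_{n+1}(b_n)$. No induction on $|S|$ is needed, since the construction of $W$ works uniformly for every $S \subseteq Y$.
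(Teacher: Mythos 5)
Your proposal is correct and follows essentially the same route as the paper: invoke Lemma~\ref{lemma:element} to get a witness for each single extra word with $b_n$ as prefix and suffix, then chain these witnesses so the $b_n$ blocks act as glue, the only cosmetic difference being that you merge consecutive shared copies of $b_n$ into one block while the paper simply concatenates the witnesses (leaving $b_nb_n$ at each junction). Your explicit locality argument using $|b_n| = 2^n \geq n+1$ just fills in the detail the paper summarizes as ``the concatenation does not produce any new factor of length $n+1$.''
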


\begin{proof}
We have proved this proposition for the case where $|S| = 1$ by Lemma~\ref{lemma:element}. Now we turn to the general case. Let $S = \{s_1, s_2, \ldots, s_m\}$.
By Lemma~\ref{lemma:element}, for each $1 \leq i \leq m$, there exists a circular word $t_i$ that witnesses $\{s_i\} \cup C_{n + 1}(b_n)$ and both the prefix and the suffix of $t_i$ equal $b_n$. We argue that the circular word $t_S = t_1t_2\cdots t_m$ witnesses $S \cup C_{n + 1}(b_n)$.

First, for any $1 \leq i \leq m$, $s_i$ appears in $t_i$ and thus in $t_S$. Moreover, every element of $C_{n + 1}(b_n)$ appears in the prefix of $t_S$: $b_nb_n$. Thus, it suffices to show that every length-$(n + 1)$ factor of $t_S$ is a member of $S \cup C_{n + 1}(b_n)$. This is shown by the fact that for any $1 \leq i < m$, both the suffix of $t_i$ and the prefix of $t_{i + 1}$ equal $b_n$, which implies that the concatenation of $t_i$ and $t_{i + 1}$ does not produce any new factor of length $n + 1$ in $t_S$.

Thus, we conclude that for any subset $S$ of $Y$, there exists a witness for the set $S \cup C_{n + 1}(b_n)$.
\qed
\end{proof}

\begin{corollary}
A lower bound for the size of $\cip_{n + 1}$ is $2^{2^n} =
\sqrt{2}^{\, 2^{n+1}}$.
\end{corollary}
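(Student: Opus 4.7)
The plan is to combine the cardinality computation from the first proposition with the representability guarantee from Proposition~\ref{pro:set} to exhibit an injection from the power set of $Y$ into $\cip_{n+1}$.

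First, I would fix a de Bruijn word $b_n$ of order $n$ and use the first proposition to get $|C_{n+1}(b_n)| = 2^n$. Since $|\Sigma^{n+1}| = 2^{n+1}$ and $Y = \Sigma^{n+1} \setminus C_{n+1}(b_n)$, this immediately gives $|Y| = 2^{n+1} - 2^n = 2^n$, so $Y$ has exactly $2^{2^n}$ subsets.

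Next, I would apply Proposition~\ref{pro:set}: for every $S \subseteq Y$, the set $S \cup C_{n+1}(b_n)$ is a non-empty circularly representable subset of $\Sigma^{n+1}$, hence an element of $\cip_{n+1}$. The map $S \mapsto S \cup C_{n+1}(b_n)$ from the power set of $Y$ into $\cip_{n+1}$ is injective, because $Y$ and $C_{n+1}(b_n)$ are disjoint by construction, so $S$ can be recovered from $S \cup C_{n+1}(b_n)$ by intersecting with $Y$. Consequently $|\cip_{n+1}| \geq 2^{|Y|} = 2^{2^n}$, and rewriting $2^{2^n} = (2^{1/2})^{2^{n+1}} = \sqrt{2}^{\, 2^{n+1}}$ finishes the argument.

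There is essentially no obstacle here: all the combinatorial work has been done in Lemma~\ref{lemma:element} and Proposition~\ref{pro:set}. The only thing to be careful about is verifying disjointness of $Y$ and $C_{n+1}(b_n)$ so that distinct subsets $S$ produce distinct unions, which is immediate from the definition $Y = \Sigma^{n+1} \setminus C_{n+1}(b_n)$.
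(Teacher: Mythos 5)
Your proposal is correct and is essentially the argument the paper intends: the corollary follows directly from Proposition~\ref{pro:set} by noting $|Y| = 2^{n+1} - 2^n = 2^n$ and that distinct subsets $S \subseteq Y$ yield distinct sets $S \cup C_{n+1}(b_n)$, giving $2^{2^n}$ elements of $\cip_{n+1}$. Your explicit verification of injectivity via disjointness is a nice touch the paper leaves implicit, but the route is the same.
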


\subsection{Upper bound}

An obvious upper bound for $|\cip_n|$ is $2^{2^n}$, since $\cip_n \subseteq 2^{\Sigma^n}$, where $|2^{\Sigma^n}| = 2^{2^n}$. In this section, we will show that a tighter upper bound is $\alpha^{2^n}$, where $\alpha = \root4\of{10}$.\\

\begin{definition}
Let $S \subseteq \Sigma^{n + 1}$ and $T \subseteq \Sigma^n$. We say
that $S$ \emph{is incident on} $T$ if there exists a circular word
$w$ such that $w$ witnesses both $S$ and $T$.
\end{definition}

\begin{example}
For example, we fix $n = 4$. Let $w = 0110$. Then $w$ is a witness for the set $S = \{0110, 1100, 1001, 0011\} \in \cip_4$ and $T = \{011, 110, 100, 001\} \in \cip_3$. It follows that $S$ is incident on $T$.  Note that $w' = 01100110$ is also a witness for $S$, and a witness for $T$ as well.
\end{example}

In fact we can argue that if $S$ is incident on $T$, then every word that witnesses $S$ also witnesses $T$.

\begin{proposition}
Every set $S \in \cip_{n + 1}$ is incident on exactly one set in $\cip_n$.
\label{pro:incident}
\end{proposition}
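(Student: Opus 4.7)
My plan is to prove existence and uniqueness separately, and in fact to exhibit the unique $T \in \cip_n$ on which $S$ is incident as an explicit function of $S$.

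For existence, I would simply start from any circular witness $w$ of $S$, i.e., any word with $C_{n+1}(w) = S$, whose existence is guaranteed since $S \in \cip_{n+1}$. Then let $T := C_n(w)$. By definition $T \in \cip_n$, and $w$ circularly witnesses both $S$ and $T$, so $S$ is incident on $T$.

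For uniqueness, the key observation I would formalize is that for any circular word $w$, the set $C_n(w)$ is completely determined by $C_{n+1}(w)$ via
\[
C_n(w) \;=\; \{\, x[1..n] : x \in C_{n+1}(w)\,\}.
\]
The inclusion $\supseteq$ is immediate, since every length-$(n+1)$ factor of $w$ has its length-$n$ prefix as a length-$n$ factor of $w$. The inclusion $\subseteq$ is the slightly more interesting direction: for any length-$n$ factor $y = w[i..i+n-1]$ of the circular word $w$, the letter $w[i+n]$ is well-defined (using indices mod $|w|$), so $w[i..i+n]$ is a length-$(n+1)$ factor of $w$ whose length-$n$ prefix is $y$; hence $y$ lies in the right-hand side. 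Consequently, if $w_1$ and $w_2$ are two circular witnesses of $S$, then
\[
C_n(w_1) \;=\; \{\, x[1..n] : x \in S\,\} \;=\; C_n(w_2),
\]
so the set $T$ on which $S$ is incident is forced to equal $\{x[1..n] : x \in S\}$, and is therefore unique.

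I do not anticipate a substantive obstacle here; the argument is purely a factor-prefix bookkeeping observation enabled by the circularity of $w$ (which is precisely what guarantees that every length-$n$ factor has a next letter). The only small point worth stating clearly in the write-up is why the inclusion $C_n(w) \subseteq \{x[1..n] : x \in C_{n+1}(w)\}$ uses circularity essentially — in the non-circular case a suffix of length $n$ of $w$ might fail to extend, which is exactly the asymmetry between $R_n$ and $\cip_n$ exploited elsewhere in the paper.
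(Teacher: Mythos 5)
Your proposal is correct and follows essentially the same route as the paper: both arguments rest on the observation that the length-$n$ circular factor set is determined by the length-$(n+1)$ circular factor set, so every circular witness of $S$ witnesses the same $T$. The only cosmetic difference is that the paper describes $T$ as the set of length-$n$ prefixes \emph{or} suffixes of elements of $S$, while you use prefixes alone (justified, as you note, by circularity); your write-up is in fact slightly more explicit than the paper's about why any witness of $S$ must witness this particular $T$.
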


\begin{proof}
Let $T = \{t \in \Sigma^n: \exists w \in S \text{ such that } t \text{ is a length-}n \text{ prefix or suffix of } w\}$. Then a word $w$ which witnesses $S$ also witnesses $T$. Thus $S$ is incident on $T$. Moreover, if $S$ is incident on $T$ and $T'$, then every witness of $S$ must also witness $T$ and $T'$. Thus we have $T = T'$. So we conclude that every set $S \in \cip_{n + 1}$ is incident on exactly one set in $\cip_n$.
\qed
\end{proof}

Now we give a partition of $\cip_{n + 1}$. Let
$$\cip_{n + 1}[T] = \{S
\in \cip_{n + 1}: S \text{ is incident on } T\}.$$
Proposition~\ref{pro:incident} implies that $\{\cip_{n + 1}[T]\}_{T \in
\Sigma^n}$ is a pairwise disjoint partition of the set $\cip_{n + 1}$.
Namely, (1) for every $T_1 \neq T_2$, we have $\cip_{n + 1}[T_1] \cap
\cip_{n + 1}[T_2] = \emptyset $ and (2) $\bigcup_{T \in \cip_n}\cip_{n
+ 1}[T] = \cip_{n + 1}$.\\

Thus we have $|\cip_{n + 1}| = \sum_{T \in \Sigma^n}|\cip_{n + 1}[T]|$. So to give an upper bound for $|\cip_{n + 1}|$, it suffices to give a upper bound for the size of $\cip_{n + 1}[T]$.

\begin{definition}
Let $x$ be a word of length $n$. We say that $P_x = \{ 0x, 1x\}$ is a
{\it pair} of order $n$ w.r.t $x$, that $S_x = \{0x, 1x, x0, x1\}$ is a {\it
skeleton} of order $n$ w.r.t. $x$, and $N_x = \{0x0, 0x1, 1x0, 1x1\}$
is a {\it net} of order $n$ w.r.t $x$. We also say that a set $S$
contains $P_x$ (resp., $S_x$ and $N_x$) if $P_x \subseteq
S$ (resp., $S_x \subseteq S$ and $N_x \subseteq S$).
\end{definition}

For any $T \subseteq \Sigma^n$, let $\sigma(T)$ denote the number of skeletons of order $n - 1$ in $T$ and let $\rho(T)$ denote the number of pairs of order $n - 1$ in $T$. We have the following proposition:
\begin{proposition}
For any $T \subseteq \Sigma^n$,
we have $|\cip_{n + 1}[T]| \leq 7^{\sigma(T)}.$
\label{pro: upper_main}
\end{proposition}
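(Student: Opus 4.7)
The plan is to decompose an arbitrary $S \in \cip_{n+1}[T]$ according to the partition of $\Sigma^{n+1}$ by the $(n-1)$-letter center: for each $x \in \Sigma^{n-1}$, the net $N_x = \{0x0, 0x1, 1x0, 1x1\}$ is a block, and the collection $\{N_x\}_{x \in \Sigma^{n-1}}$ partitions $\Sigma^{n+1}$. Hence $S$ is determined by the tuple $(S \cap N_x)_{x \in \Sigma^{n-1}}$, and it suffices to bound, for each $x$ separately, the number of possible values of $S \cap N_x$ as $S$ ranges over $\cip_{n+1}[T]$.

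Fix $x$ and let $w$ circularly witness both $S$ and $T$. I would first record two constraints linking $S \cap N_x$ with $S_x \cap T$. First, if $axb \in S$ for some $a, b \in \Sigma$, then both $ax$ and $xb$ lie in $T$, since they are the two length-$n$ factors of $axb$. Second, if $ax \in T$, then because $w$ contains $ax$ as a factor, the letter circularly following this occurrence yields an element $axb \in S \cap N_x$; symmetrically, if $xb \in T$, then some $a'xb$ belongs to $S \cap N_x$.

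I would then run a case analysis on the cardinality of $S_x \cap T$. If $S_x \cap T = \emptyset$, the first constraint forces $S \cap N_x = \emptyset$. The case $|S_x \cap T| = 1$ cannot occur, since the sole element would, via the second constraint, force an element of $S \cap N_x$ whose other length-$n$ factor (via the first constraint) must also lie in $S_x \cap T$. For $|S_x \cap T| \in \{2,3\}$, a short case sweep (up to the symmetries on $\{0,1\}$) shows that the two constraints pin $S \cap N_x$ down uniquely. Finally, when $S_x \subseteq T$, the second constraint amounts to requiring that $S \cap N_x$ contain at least one element starting with $0$, at least one starting with $1$, at least one ending with $0$, and at least one ending with $1$; a direct inclusion--exclusion on the $2^4 = 16$ subsets of $N_x$ leaves exactly $7$ such subsets.

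Multiplying the local bounds over $x \in \Sigma^{n-1}$ yields $|\cip_{n+1}[T]| \leq 7^{\sigma(T)} \cdot 1^{\,2^{n-1} - \sigma(T)} = 7^{\sigma(T)}$. The main obstacle is the skeleton case: the uniqueness statements for the non-skeleton cases reduce to a mechanical sweep, whereas the combinatorial count of exactly $7$ valid subsets of $N_x$ under the two covering conditions is precisely where the base of the exponential arises and warrants explicit verification.
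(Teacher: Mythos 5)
Your proposal is correct and takes essentially the same route as the paper: both decompose $\Sigma^{n+1}$ into the nets $N_x$, bound the number of possible traces $S \cap N_x$ by $7$ when the skeleton $S_x \subseteq T$ and by at most $1$ otherwise, and multiply over $x \in \Sigma^{n-1}$. Your finer case analysis on $|S_x \cap T|$ and the inclusion--exclusion verification that exactly $7$ subsets of $N_x$ meet the four covering conditions are simply more explicit versions of the paper's enumeration of the seven feasible sets and its ``without loss of generality'' handling of the case $S_x \not\subseteq T$.
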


Before giving the proof for Proposition~\ref{pro: upper_main}, we
introduce another definition.

\begin{definition}
A set $R$ is {\it feasible} for a set $T \subseteq \Sigma^n$ if there exists $S \in \cip_{n + 1}[T]$ such that $R \subseteq S$.
\end{definition}

We observe that $\Sigma^{n + 1} = \bigcup_{x \in \Sigma^{n - 1}}N_x$ and thus any subset $S \in \Sigma^{n + 1}$ is a disjoint union of subsets of nets of order $n - 1$. Formally, for any subset $S \in \Sigma^{n + 1}$, we have $S = \bigcup_{x \in \Sigma^{n - 1}}R_x$, where $R_x \subseteq N_x$.

\begin{proof}[of Proposition~\ref{pro: upper_main}]
Let $F_x$ denote the set of feasible subsets (for $T$) of the net $N_x$. If $S \in R_{n + 1}[T]$, then $S$ is a disjoint union of feasible subsets (for $T$) of nets. Thus we have $|R_{n + 1}[T]| \leq \prod_{x \in \Sigma^n} |F_x|$. In order to prove this proposition, it now suffices to show that for any $x \in \Sigma^{n - 1}$, the following condition holds.
\begin{itemize}
\item{if $S_x \subseteq T$, then $|F_x| \leq 7$;}
\item{otherwise $|F_x| \leq 1$.}
\end{itemize}

For any $x \in \Sigma^{n - 1}$, we consider all the possible feasible subsets of $N_x$. Let $F$ denote any feasible subset of $N_x$.

\begin{itemize}
\item{
For the first case where $S_x \subseteq T$, we have the following properties:
\begin{enumerate}
\item{Either $0x0 \in F$ or $0x1 \in F$ since $0x \in T$;}
\item{Either $1x0 \in F$ or $1x1 \in F$ since $1x \in T$;}
\item{Either $0x0 \in F$ or $1x0 \in F$ since $x0 \in T$;}
\item{Either $0x1 \in F$ or $1x1 \in F$ since $x1 \in T$.}
\end{enumerate}

Hence we have at most $7$ possible feasible subsets of $N_x$ which are listed as follows: \{$0x0$, $1x1$\}, \{$0x0$, $0x1, $$1x1$\}, \{$0x0$, $1x0$, $1x1$\},
\{$0x0$, $0x1$, $1x0$, $1x1$\}, \{$0x0$, $0x1$, $1x0$\}, \{$0x1$, $1x0$\}, \{$0x1$, $1x0$, $1x1$\}. Thus $|F_x| \leq 7$.
%
%
%
}

\item For the second case where $S_x \not\subseteq T$, we argue that $|F_x| \leq 1$. Without loss of generality, suppose $0x \not\in T$. It follows that:
\begin{enumerate}
\item{$0x0$ and $0x1$ cannot occur in $F$ since $0x \not\in T$;}
\item{$1x0 \in F$ if and only if $x0 \in T$;}
\item{$1x1 \in F$ if and only if $x1 \in T$;}
\end{enumerate}
Hence, $F$ is fixed. It follows that $|F_x| \leq 1$.

\end{itemize}

By finishing the argument on the above two cases, we conclude that $|\cip_{n + 1}[T]| \leq 7^{\sigma(T)}$.
\qed
\end{proof}

Now, we are close to the core part. Instead of computing the number of skeletons, which is quite complex, we consider the number of pairs.

\begin{proposition}\label{pro:upperbound}
The size of the set $|\cip_{n + 1}|$ is bounded by $10^{2^{n - 1}}$.
\end{proposition}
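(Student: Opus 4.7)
My plan is to chain Proposition~\ref{pro: upper_main} with a pair-counting estimate: I first replace the skeleton count $\sigma(T)$ by the pair count $\rho(T)$ (which only inflates the bound), and then exploit the fact that the pairs $P_x$ actually partition $\Sigma^n$ to turn the resulting sum over $T$ into an independent product.

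First I would verify the inequality $\sigma(T) \leq \rho(T)$ for every $T \subseteq \Sigma^n$. This is immediate from the definitions, since $P_x = \{0x, 1x\} \subseteq S_x$, so any $x \in \Sigma^{n-1}$ witnessing $S_x \subseteq T$ also witnesses $P_x \subseteq T$. Combining this with Proposition~\ref{pro: upper_main}, and noting that $|\cip_{n+1}[T]| = 0$ whenever $T \notin \cip_n$ so we may sum over all subsets of $\Sigma^n$, gives
$$|\cip_{n+1}| \;=\; \sum_{T \in \cip_n} |\cip_{n+1}[T]| \;\leq\; \sum_{T \subseteq \Sigma^n} 7^{\sigma(T)} \;\leq\; \sum_{T \subseteq \Sigma^n} 7^{\rho(T)}.$$

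Next I would evaluate the right-hand side by factorization. The collection $\{P_x : x \in \Sigma^{n-1}\}$ is a family of $2^{n-1}$ pairwise disjoint two-element subsets of $\Sigma^n$ whose union is all of $\Sigma^n$. Hence a subset $T \subseteq \Sigma^n$ is determined by independently choosing, for each $x \in \Sigma^{n-1}$, one of the four possible intersections $T \cap P_x \in \{\emptyset, \{0x\}, \{1x\}, P_x\}$. Only the last of these four choices (namely $T \cap P_x = P_x$) contributes $1$ to $\rho(T)$; the other three contribute $0$. Consequently the sum factors as
$$\sum_{T \subseteq \Sigma^n} 7^{\rho(T)} \;=\; \prod_{x \in \Sigma^{n-1}} \bigl(3 \cdot 7^{0} + 1 \cdot 7^{1}\bigr) \;=\; 10^{2^{n-1}},$$
which is the desired upper bound on $|\cip_{n+1}|$.

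The only genuinely clever step is the relaxation from $\sigma$ to $\rho$: skeletons do not tile $\Sigma^n$ in any clean way, whereas pairs do, and it is precisely this partition structure that causes the sum to collapse into an independent product with per-factor contribution $3 + 7 = 10$. Everything else is bookkeeping, so I do not expect a real obstacle.
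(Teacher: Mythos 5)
Your proof is correct and takes essentially the same route as the paper: both replace the skeleton count $\sigma(T)$ by the pair count $\rho(T)$ via Proposition~\ref{pro: upper_main}, and then bound $|\cip_{n+1}|$ by $\sum_{T \subseteq \Sigma^n} 7^{\rho(T)}$. The only difference is in evaluating that sum --- the paper groups subsets by size $k$ and pair count $i$ (the terms $L_{k,i}$) and applies binomial identities, while your factorization over the pair-partition of $\Sigma^n$ gives $(3+7)^{2^{n-1}} = 10^{2^{n-1}}$ directly, a slightly cleaner bookkeeping of the same computation.
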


\begin{proof}
Let $L_{k, i}$ denote the number of subsets $T \in \cip_n$, such that $|T| = k$ and $\rho(T) = i$. There are in total $2^{n - 1}$ pairs in $\Sigma^n$, and we first choose $i$'s pairs from them. Then, we choose the other $k - 2i$ elements which do not form any pair from the remaining $2^{n - 1} - i$ elements. Thus, we have $$L_{k, i} = {2^{n - 1}\choose i} {2^{n - 1} - i\choose k - 2i} 2^{k - 2i}.$$

Note that $k \geq 2i$ since a set of $k$ elements can contain at most $\frac{k}{2}$ pairs and the term $L_{k, i}$ vanishes when $k - 2i > 2^{n - 1} - i$. Thus we have $$ |\cip_{n + 1}| = \sum_{T \in \Sigma^n}|\cip_{n + 1}[T]| \leq \sum_{k = 0}^{2^n} \sum_{i = 0 }^{\frac{k}{2}} L_{k, i} 7^i.$$ The inequality holds since we count the number of pairs instead of the number of skeletons and the number of pairs is always greater than or equal to the number of skeletons.
Then we can see that
\begin{equation*}
\begin{aligned}
|\cip_{n + 1}| & \leq \sum_{k = 0}^{2^n} \sum_{i = 0}^{\frac{k}{2}} {2^{n - 1}\choose i} {2^{n - 1} - i\choose k - 2i} 2^{k - 2i} 7^i
            & \leq \sum_{i = 0}^{2^{n - 1}} {2^{n - 1}\choose i} 7^i \sum_{k = 2i}^{2^n} {2^{n - 1} - i\choose k - 2i} 2^{k - 2i}
\end{aligned}
\end{equation*}
by writing $L_{k, i}$ in closed form. Note that
\begin{equation*}
    \sum_{k = 2i}^{2^n} {2^{n - 1} - i\choose k - 2i} 2^{k - 2i} = \sum_{k = 0}^{2^n - 2i} {2^{n - 1} - i\choose k} 2^k
                                                                 = \sum_{k = 0}^{2^{n - 1} - i} {2^{n - 1} - i\choose k} 2^k
                                                                 = 3^{2^{n - 1} - i}.
\end{equation*}

So we have $$|\cip_{n + 1}| \leq  \sum_{i = 0}^{2^{n - 1}} {2^{n - 1}\choose i} 7^i 3^{2^{n - 1} - i}  = 10^{2^{n - 1}}. $$
\qed
\end{proof}

Proposition~\ref{pro:upperbound} directly implies the upper bound we claimed in the beginning of this section.

\section{Shortest witness}

Recall that $\mu_n$ (resp., $\nu_n$) is the maximum length of
the shortest non-circular witness (resp., circular witness) over
all subsets of $\Sigma^n$. The quantities of $\mu_n$ and $\nu_n$ are of
interest since we can enumerate all sequences of length less than or
equal to $\mu_n$ (resp., $\nu_n$) in order to list all the
representable (resp., circularly representable) subsets of
$\Sigma^n$.  In this section we obtain an upper bound on
$\mu_n$ and $\nu_n$.

We need the following result of Hamidoune \cite[Prop.\ 2.1]{hamidoune1978}.
Since the result is little-known and has apparently not appeared
in English, we give the proof here.  By a {\it Hamiltonian walk} we mean
a closed walk, possibly repeating vertices and edges, that visits every
vertex of $G$.

\begin{proposition}
Let $G = (V,E)$ be a directed graph on $n$ vertices.  If $G$ is strongly
connected (that is, if there is a directed path from every vertex to every
vertex), then there is a Hamiltonian walk of length
at most $\lfloor (n+1)^2/4 \rfloor$.  Furthermore, this bound is
best possible.
\end{proposition}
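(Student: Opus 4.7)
My plan is to establish the upper bound by induction on $n$ and to produce an explicit extremal family for the tightness claim.

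For the upper bound, I would proceed by induction on $n$, with $n=1$ trivial. For the inductive step, given a strongly connected $G$ on $n$ vertices, I would choose a longest simple directed cycle $C$ of $G$, of length $k$. If $k = n$ then $C$ is itself a closed Hamiltonian walk of length $n \leq \lfloor (n+1)^2/4 \rfloor$ and we are done. If $k < n$, I would contract $C$ to a single super-vertex $c$, obtaining a strongly connected digraph $G'$ on $n - k + 1 < n$ vertices; apply the inductive hypothesis to get a closed Hamiltonian walk $W'$ in $G'$ of length at most $\lfloor (n - k + 2)^2 / 4 \rfloor$, and then lift $W'$ back to $G$. At each of the $m$ visits of $W'$ to $c$, the lift enters some vertex $u \in C$ and must exit at some $u' \in C$; it fills the gap by walking around $C$ from $u$ to $u'$ at cost at most $k-1$, and the union of these $m$ partial tours is arranged to cover every vertex of $C$ (possibly absorbing one full revolution of $C$ into a single visit). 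The resulting estimate is roughly $|W| \leq |W'| + m(k-1)$, which combined with the inductive bound and suitable control of $m$ should yield the target bound.

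For the matching lower bound I would exhibit the \emph{broom} digraph $B_{n,k}$: take a directed $k$-cycle $v_1 \to v_2 \to \cdots \to v_k \to v_1$, plus $n - k$ extra vertices $u_1, \ldots, u_{n-k}$, the only arcs touching $u_j$ being $v_k \to u_j$ and $u_j \to v_1$. This digraph is strongly connected, but each $u_j$ has the unique in-neighbor $v_k$ and unique out-neighbor $v_1$, so every closed walk covering all of the $u_j$ must traverse the entire path $v_1 \to v_2 \to \cdots \to v_k$ at least $n - k$ times, producing a walk of length exactly $(n - k)(k + 1)$. Maximizing over $k$ gives the value $\lfloor (n+1)^2/4 \rfloor$ at $k = \lfloor (n-1)/2 \rfloor$, matching the upper bound.

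The main obstacle is extracting the sharp constant $1/4$ in the inductive step: a naive accounting of the lift easily overshoots the target. The key insight should come from the longest-cycle choice of $C$ (which forces $G'$ to be ``small'' in a Hamiltonian sense) together with a careful bookkeeping of the number of visits $m$ of $W'$ to $c$. Checking against the broom example — where $G'$ is a star on $n - k + 1$ vertices with shortest closed Hamiltonian walk of length $2(n - k)$ and exactly $m = n - k$ visits to $c$, each lifting to a $(k+1)$-step excursion — suggests that the correct identity to prove is $|W| \leq |W'| + m(k-1)$, and the delicate point is showing that the $m$ lifted detours can always be organized to simultaneously cover every vertex of $C$ rather than demanding a separate full revolution.
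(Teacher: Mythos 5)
Your tightness construction is correct and is essentially the paper's own extremal example (a directed chain of about $n/2$ vertices with about $n/2$ pendant vertices routed through its endpoints); the computation that the shortest covering closed walk has length $(n-k)(k+1)$, maximized near $k\approx n/2$ to give $\lfloor (n+1)^2/4\rfloor$, is right. The problem is the upper bound, where your plan has a genuine gap that you yourself flag: the inductive hypothesis applied to the contracted graph $G'$ controls only the \emph{length} of the closed Hamiltonian walk $W'$, not the number $m$ of times it passes through the contracted vertex $c$, and the lifting cost is governed by $m$ (and by the cyclic gaps between the entry and exit vertices of each visit, which are dictated by $W'$ and not under your control). Nothing in choosing $C$ to be a longest cycle bounds $m$: a priori $m$ can be of order $|W'|$, and then $|W'|+m(k-1)$ can be of order $k\,(n-k)^2$, e.g.\ cubic in $n$ when $k\approx n/2$, far above $\lfloor (n+1)^2/4\rfloor$. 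To make the induction go through you would have to strengthen the induction hypothesis (for instance, bound the number of visits the minimal walk makes to each vertex), and that is precisely the step labelled ``suitable control of $m$'' that is missing; as written the inductive argument does not yield the stated constant.

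The paper avoids this issue entirely with a simpler, non-inductive decomposition: take a \emph{longest simple path} $L$ (say with $r$ edges), list the $k=n-1-r$ vertices $v_1,\dots,v_k$ not on $L$, and form the closed walk $L_0L_1\cdots L_k L$, where $L_i$ is any simple path from $v_i$ to $v_{i+1}$ (with $v_0$ the last and $v_{k+1}$ the first vertex of $L$), these paths existing by strong connectivity. The two features your scheme lacks are then automatic: the number of detours is exactly $k+1$, tied to the number of uncovered vertices, and each detour, being a simple path, has at most $r$ edges by the extremal choice of $L$. Hence the walk has at most $(k+2)r=r(n+1-r)\le\lfloor (n+1)^2/4\rfloor$ edges. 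If you want to salvage your write-up, replace ``longest cycle plus contraction'' by this ``longest simple path plus chaining through the off-path vertices''; your tightness half can stay essentially as written.
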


\begin{proof}
Let $L$ be a longest simple path in $G$. (A simple path does not repeat
edges or vertices.) Let $V-L = \{ v_i \, : \, 1 \leq i \leq  k \}$.
Let $v_0$ be the last
vertex in $L$ and $v_{k+1}$ be the first vertex in $L$. Let $L_i$
be a simple path
from $v_i$ to $v_{i+1}$. Then a Hamiltonian walk $W$
is obtained by following the
edges in $L_0, L_1, \ldots , L_k$, and then those in $L$.
So the number of edges
in $W$ is at most $(k+2)|L| = |L|(n+1-|L|)$. But it is easy to see that
$r(n+1-r)$ is maximized when $r = \lceil n/2 \rceil$, so
$r(n+1-r) = \lfloor (n+1)^2/4 \rfloor$, as claimed.

To see that this bound is best possible, consider a graph where there is a
directed chain of $\lfloor n/2 \rfloor$ vertices, where the last vertex has a
directed edge to $\lceil n/2 \rceil$ other vertices, and each of those vertices
have a single directed edge back to the start of the chain. The
shortest walk covering all the vertices traverses the chain, then an
edge to one of the other vertices, then a single edge back, and repeats
this $\lceil n/2 \rceil$ times.
The total length is then $(\lfloor n/2 \rfloor +1) \lceil n/2 \rceil
= \lfloor (n+1)^2/4 \rfloor$. So the bound is tight.
\qed
\end{proof}

From this we immediately get

\begin{proposition}
An upper bound for $\mu_n$ and $\nu_n$ is $2^{2n-2} + 2^{n-1}$.
\end{proposition}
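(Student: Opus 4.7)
The plan is to translate the problem into graph theory via a line-graph construction and then invoke the preceding Hamiltonian walk proposition. For a subset $S \subseteq \Sigma^n$, let $G_S$ be the directed graph with vertex set $S$ and an edge from $u$ to $v$ whenever $u[2..n] = v[1..n-1]$. There is a natural bijection between words $w$ of length $t \geq n$ with $F_n(w) \subseteq S$ and walks $w[1..n], w[2..n+1], \ldots, w[t-n+1..t]$ of $t - n + 1$ consecutive vertices in $G_S$; such a $w$ witnesses $S$ precisely when the corresponding walk visits every vertex of $G_S$. In the circular interpretation the walk becomes closed, and a circular witness of length $t$ corresponds to a closed walk of $t$ edges, while a non-circular witness of length $t$ corresponds to an open walk of $t-n$ edges.

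For the circular bound, observe that the existence of a circular witness yields a closed walk visiting all $|S|$ vertices of $G_S$, which in particular makes $G_S$ strongly connected. The preceding proposition then produces a closed spanning walk with at most $\lfloor (|S|+1)^2/4 \rfloor$ edges. Since $|S| \leq 2^n$, a direct computation gives $\lfloor (2^n+1)^2/4 \rfloor = 2^{2n-2} + 2^{n-1}$, yielding $\nu_n \leq 2^{2n-2} + 2^{n-1}$.

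For the non-circular bound, I would run essentially the same construction as in the proof of the preceding proposition, but truncate before the final return leg in order to produce an open (rather than closed) spanning walk. Concretely, with $L$ a longest simple path in $G_S$ and $v_1, \ldots, v_k$ the off-$L$ vertices (enumerated in a suitable order), one traverses $L$ and then walks successively from $v_{i-1}$ to $v_i$ along simple paths of at most $|L|$ edges, where $v_0$ denotes the last vertex of $L$. This uses at most $(k+1)|L| = (m - |L|)|L| \leq \lfloor m^2/4 \rfloor$ edges, where $m = |S| \leq 2^n$, so the corresponding non-circular witness has length at most $n + 2^{2n-2}$, which is at most $2^{2n-2} + 2^{n-1}$ by the elementary inequality $n \leq 2^{n-1}$ (valid for every $n \geq 1$). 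The main obstacle is that for merely representable $S$ the graph $G_S$ need not be strongly connected, so the simple connecting paths needed by the construction are not guaranteed directly by the preceding proposition; I would address this by ordering the off-$L$ vertices according to first appearance in an arbitrary fixed witness of $S$, which guarantees the required walks exist inside $G_S$, with simple paths of at most $|L|$ edges then following from the maximality of $L$.
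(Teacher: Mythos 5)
Your treatment of $\nu_n$ is correct and is exactly the argument the paper intends: a circular witness for $S$ yields a closed walk through every vertex of the overlap graph $G_S$, hence $G_S$ is strongly connected, the Hamiltonian-walk proposition gives a closed spanning walk with at most $\lfloor (|S|+1)^2/4\rfloor \leq \lfloor (2^n+1)^2/4\rfloor = 2^{2n-2}+2^{n-1}$ edges, and reading off the labels of that walk gives a circular witness of that length.

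The non-circular half, however, has a genuine gap, and it is precisely at the point you flagged. Your construction traverses the longest simple path $L$ in full and only afterwards collects the off-$L$ vertices $v_1,\ldots,v_k$ via simple paths $v_0\to v_1\to\cdots\to v_k$, with $v_0$ the last vertex of $L$. Ordering $v_1,\ldots,v_k$ by first appearance in a fixed witness walk $W_0$ does guarantee walks from $v_{i-1}$ to $v_i$ for $i\geq 2$ (follow $W_0$ between their first occurrences), but it guarantees nothing about the step from $v_0$ to $v_1$: the endpoint of a longest simple path is not controlled by $W_0$ at all, and it can lie in a sink strongly connected component from which no off-$L$ vertex is reachable. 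For a digraph with a covering walk this really happens: take a small strongly connected ``head'' containing two parallel vertices $a$ and $b$ (a covering walk visits $a$, loops back, visits $b$) followed by a long directed tail with no return edges; every longest simple path uses only one of $a,b$ and then runs down the entire tail, so the other of $a,b$ is off-$L$ yet unreachable from $L$'s last vertex. More fundamentally, without strong connectivity the schedule ``all of $L$ first, stragglers afterwards'' may be infeasible, since some off-$L$ vertices must be visited before the walk enters the non-returning portion of $L$; repairing this requires interleaving the off-$L$ vertices with the traversal of $L$ (say, strongly connected component by component), and the straightforward versions of that argument incur extra additive terms of order $2^n$ beyond $\lfloor m^2/4\rfloor$, so they do not obviously recover the stated constant $2^{2n-2}+2^{n-1}$. (The paper gives no details here either --- it deduces both bounds ``immediately'' from the walk proposition --- but as written your derivation of the bound on $\mu_n$ does not go through.)
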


\section{Numerical results}

It is not feasible to enumerate every single word to verify whether a subset is circularly representable (or non-circularly representable). For this reason, we exploit ideas from graph theory.

Formally, we define $G_n = (V_n, E_n)$, where
\begin{equation*}
\begin{aligned}
V_n & = \{(S, u, v): S \subseteq \Sigma^n \text{ and } u, v \in \Sigma^n\} \text{ and} \\
E_n & = \{((S, u, v), (S \cup \{x\}, u, x)): S \subseteq \Sigma^n, \text{ } u, v, x\in \Sigma^n, \text{ and }v[2..n] = x[1..n - 1]\}.
\end{aligned}
\end{equation*}
 We say that a node $(S, u, v)$ is {\it valid} if $S$ is witnessed by a non-circular word $w$ for which the length-$n$ prefix is $u$ and the length-$n$ suffix is $v$.

We use a breadth-first search strategy to compute all the possible valid nodes in $G_n$. Let $I$ denote a subset of nodes $\{(\{u\}, u, u): a \in \Sigma^n\}$ in $G_n$. Nodes in $G_n$ that are connected to any node in $I$ can be proven valid by induction. Thus, a breadth-first search begins with the subset $I$ and enumerates all nodes that are connected to nodes in $I$.

The relation between valid nodes in $G_n$ and non-empty representable subsets of order $n$ is that any subset $S \subseteq \Sigma^n$ is representable if and only if
there exist $u, v \in \Sigma^n$ such that $(S, u, v)$ is valid. This relation can be proved by induction. Similarly, any subset $S \subseteq \Sigma^n$ is circularly representable if and only if there exists $u \in \Sigma^n$ such that $(S, u, u)$ is valid and the minimum distance $d$ between $(S, u, u)$ and nodes in $I$ satisfies the inequality $d \geq n - 1$.

With the above properties, we can enumerate all the possible non-empty representable (or circularly representable) subsets of order $n$. Our results are shown in the following table.  The last two columns give words $w$
of length $\nu_n$ (resp., $\mu_n$) for which no shorter word
witnesses $C_n (w)$ (resp., $F_n (w)$).

\begin{center}
\begin{tabular}{|c|c|c|c|c|c|c|}
  \hline
  $n$ & $|\cip_n|$ & $|R_n|$ & $\nu_n$ & $\mu_n$ & longest circ. witness & longest witness\\
  \hline
  1 & 3 & 3 & 2 &  2 & 01 & 01  \\
  \hline
  2 & 6 & 14 & 4 &  5 & 0011 & 00110 \\
  \hline
  3 & 27 & 121 & 9 &  10 & 000100111 & 0001011100 \\
  \hline
  4 & 973 & 5921 & 24 &  24 & \footnotesize{000010001011100011101111 } & \footnotesize{000010010101100101101111} \\
  \hline
  5 & 2466131 & 20020315 & 82 & 77 & --- & --- \\
  \hline
\end{tabular}
\end{center}

\section{Fixed-length witnesses}

We now turn to a related question.  We fix a length $n$ and we ask, how
many different subsets of $\Sigma^n$ can we obtain by taking the
(ordinary, non-circular factors) of a word of length $t$?  We call this
quantity $T(t,n)$.  As we will
see, for $t < 2n$, there is a relatively simple answer to this
question.

In order to compute $T(t, n)$, we consider the number of words that
witness the same subset of $\Sigma^n$. Suppose $S \subseteq \Sigma^n$.
Let $C_t(S)$ denote the number of words of length $t$ that witness $S$.
Then we have
$$T(t, n) = 2^t - \sum\limits_{{S \in \Sigma^n} \atop  {C_t(S) >
1}} (C_t(S) - 1).$$
It suffices to characterize what subsets $S$
satisfy $C_t(S) > 1$ and to determine $C_t(S)$.

For $t < 2n$, we have such a characterization by Theorem~\ref{strong_pro}
below. Before stating the proposition, we first introduce some notation.

Let $w$ be a word. Let $\pref(w)$ denote the set of prefixes of $w$. A \textit{period} $p$ of $w$ is a positive integer such that $w$ can be factorized as
$$w = s^ks', \text{ with }|s| = p, \text{ }s' \in \pref(s), \text{ and } k \geq 1.$$
Let $\pi(w)$ denote the minimal period of $w$.

The \textit{root} of a word $w$ is the prefix of $w$ with length
$\pi(w)$. Let $r(w)$ denote the root of $w$. Two words $w$ and $w'$ are
{\it conjugate} if there exist $u, v \in \Sigma^{*}$ such that $w = uv$
and $w' = vu$; $w$ and $w'$ are \textit{root-conjugate} if their roots
$r(w)$ and $r(w')$ are conjugate.

The following theorem is crucial for our work and of independent interest.

\begin{theorem}
\label{strong_pro}
Let $t, n, k$ be such that $t = n + k$, $n \geq k + 1$, and $k \geq
0$.  Let $w$ and $w'$ be distinct words of length $t$ over an
arbitrary alphabet.  Then $F_n(w) = F_n(w')$ iff $\pi(w) = \pi(w') \leq
k + 1$ and $w, w'$ are root-conjugate.
\end{theorem}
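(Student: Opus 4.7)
My plan is to handle the two directions separately; the backward direction is fairly direct, while the forward direction is where the real work lies.

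For the backward direction, I would assume $\pi(w) = \pi(w') = p \leq k+1$ and that $r(w), r(w')$ are conjugate. Since $p \leq k+1 \leq n$, every length-$n$ factor of $w$ has period $p$ and is determined by its length-$p$ prefix, which must be a cyclic rotation of $r(w)$. Because $w$ has $k+1 \geq p$ consecutive starting positions for length-$n$ factors, these prefixes run through all $p$ cyclic rotations of $r(w)$, so $F_n(w)$ is exactly the set of length-$n$ words whose length-$p$ prefix is some rotation of $r(w)$. The analogous description applies to $F_n(w')$, and conjugacy of the primitive roots ensures they share the same set of rotations, so $F_n(w) = F_n(w')$.

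For the forward direction, suppose $F_n(w) = F_n(w')$ with $w \neq w'$. The plan is to model both words as walks in the directed de Bruijn graph $G$ whose vertices are $\Sigma^{n-1}$ and whose edges are $\Sigma^n$, with each edge $a_1 \cdots a_n$ running from $a_1 \cdots a_{n-1}$ to $a_2 \cdots a_n$. Each of $w, w'$ is then a walk of exactly $k+1$ edges, and both walks use the same edge set. Let $H$ be the subgraph of $G$ with edge set $E(H) = F_n(w) = F_n(w')$; both walks traverse every edge of $H$ at least once.

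The first step is to show $H$ contains a directed cycle. Either (i) $|E(H)| = k+1$ and the two walks are distinct Eulerian trails in $H$; since they differ, they diverge at some vertex of out-degree $\geq 2$ in $H$, and the return to that vertex required to exhaust its outgoing edges traces a directed cycle; or (ii) $|E(H)| < k+1$, so each walk repeats some edge, and the closed sub-walk between two consecutive uses of that edge contains a cycle. Pick $C$ to be a shortest cycle in $H$, of length $p$; by minimality, the circular word $c$ of length $p$ labelling $C$ is primitive, and $p \leq k+1$. The main obstacle is then to prove $H = C$: any edge of $H$ outside $C$ would create an in/out-degree imbalance at its endpoints pinning down the start- and end-vertices of every covering walk of length $k+1$, and, using the tight edge budget afforded by $n \geq k+1$ (so that $t = n+k \leq 2n-1$ leaves little room to deviate from $C$), it would further pin down the order of edge traversals, contradicting the existence of two distinct covering walks $w \neq w'$. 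Once $H = C$ is established, each of $w, w'$ is a length-$t$ factor of $c^\infty$ read off $C$ starting at some vertex, so both have minimum period $p$ (a shorter period would contradict primitivity of $c$), and their roots are both cyclic rotations of $c$, hence conjugate.
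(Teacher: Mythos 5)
Your backward direction is fine, and your reformulation of the forward direction is legitimate: with $H$ the subgraph of the order-$(n-1)$ de Bruijn graph whose edges are the common factor set, the theorem is indeed equivalent to showing that $H$ is a single simple cycle with primitive label, traversed cyclically by both words. This is a genuinely different route from the paper, which instead argues by induction on $k$, using the Fine--Wilf theorem, a Lyndon--Sch\"utzenberger-type lemma, and the Carpi--de Luca result on semiperiodic words. Your step producing \emph{some} cycle in $H$ is essentially correct (modulo the small point that the two walks need not start at the same vertex; one should invoke the degree conditions for Eulerian trails, or note that in the balanced case the covering walk is closed and hence already contains a cycle).

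The gap is at the step you yourself flag as the main obstacle, $H=C$, and it is a genuine one. The claim that ``any edge of $H$ outside $C$ would create an in/out-degree imbalance at its endpoints'' is false: $H$ could, for all your argument shows, be two edge-disjoint cycles of lengths $p_1,p_2$ sharing a single vertex (a figure-eight), which is balanced at every vertex; such a graph admits many covering closed walks of length $p_1+p_2$, starting at any vertex and taking either loop first, so degree counting pins down neither endpoints nor traversal order, and no contradiction with the existence of two distinct covering walks follows. What actually excludes this configuration (and the other non-cycle shapes: cycle with a chord, cycle with a pendant path, two cycles joined by a path, \ldots) is word combinatorics riding on the hypothesis $n\geq k+1$: in the figure-eight, the shared vertex is a word of length $n-1\geq p_1+p_2-1$ having both periods $p_1$ and $p_2$, so by Fine--Wilf it has period $\gcd(p_1,p_2)$, which collapses both loops onto wrappings of one shorter cycle, contradicting their simplicity. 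Your sketch contains no argument of this kind; ``the tight edge budget \ldots would further pin down the order of edge traversals'' is an expectation, not a proof, and it is precisely here that the real content of the theorem (which the paper handles via Fine--Wilf, Lyndon--Sch\"utzenberger, and semiperiodicity) lives. Until the case analysis of all coverable shapes of $H$ is carried out with such period arguments, the forward direction is not established.
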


One direction is easy:  if $w$ and $w'$ are root-conjugate with period
$p \leq k+1$, then there are $p$ places to begin,
and considering consecutive factors of length $n+p-1$ gives
exactly $p$ distinct length-$n$ factors.

For the other direction,
we need three lemmas.

\begin{lemma} \label{lemma2}
(Fine-Wilf theorem~\cite[Theorem 1]{fine1965uniqueness})~Let $w_1, w_2$ be two words. If $w_1$ and $w_2$ have a common prefix of length $\pi(w_1) + \pi(w_2) - 1$, then $r(w_1) = r(w_2)$.
\end{lemma}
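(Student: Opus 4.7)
The plan is to reduce the statement to the classical ``infinite-word'' form of Fine--Wilf applied to the common prefix, then propagate the resulting small period back out to the full words $w_1$ and $w_2$. Write $p_1 = \pi(w_1)$, $p_2 = \pi(w_2)$, let $d = \gcd(p_1,p_2)$, and let $u$ denote the common prefix of length $p_1 + p_2 - 1$. Because $|u| \geq p_i$ and $w_i$ is $p_i$-periodic, the word $u$ inherits the period $p_i$ for each $i \in \{1,2\}$.

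The first substantive step is the classical Fine--Wilf statement for a single word: if $u$ has both periods $p_1$ and $p_2$ and $|u| \geq p_1 + p_2 - d$, then $u$ has period $d$. Since $|u| = p_1 + p_2 - 1 \geq p_1 + p_2 - d$, this applies. I would give a short self-contained proof by strong induction on $p_1 + p_2$: the base case $p_1 = p_2$ is trivial; otherwise, assuming WLOG $p_1 > p_2$, note that $u[i] = u[i+p_1] = u[i+p_1-p_2]$ for the range where all three indices lie in $u$, which is guaranteed by the length hypothesis. This shows $u$ has period $p_1 - p_2$, and since $\gcd(p_1 - p_2, p_2) = d$ and the new length condition is still met, the induction hypothesis finishes the case.

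The second step is to lift the period $d$ from $u$ back up to $w_1$ and $w_2$. Since $d \mid p_1$, the length-$p_1$ prefix of $w_1$ is a prefix of $u$ and hence has period $d$; equivalently, this prefix is of the form $r^{p_1/d}$ where $r$ is its length-$d$ prefix. Because $w_1$ is $p_1$-periodic and $p_1$ is a multiple of $d$, we conclude that all of $w_1$ has period $d$. The identical argument applies to $w_2$.

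Finally, minimality of $\pi(w_i)$ gives $p_i \leq d$, while $d \mid p_i$ gives $d \leq p_i$, so $p_1 = p_2 = d$. The roots $r(w_1)$ and $r(w_2)$ are therefore both equal to the length-$d$ prefix of $u$, and so $r(w_1) = r(w_2)$, as required. The only delicate step is the inductive Fine--Wilf proof on $u$: the length bookkeeping ensuring that every index we appeal to lies inside $u$ must use the sharp $|u| \geq p_1 + p_2 - d$ bound, and this is where care is needed, but everything else is bookkeeping.
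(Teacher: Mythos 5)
The paper itself offers no proof of this lemma: it is stated purely as a citation of Fine and Wilf's theorem, so there is nothing internal to compare against. Your proposal supplies an actual derivation, and its overall architecture is correct and is the standard one: the common prefix $u$ of length $p_1+p_2-1$ inherits both periods $p_1=\pi(w_1)$ and $p_2=\pi(w_2)$; the classical periodicity lemma then gives $u$ the period $d=\gcd(p_1,p_2)$; since $d\mid p_i$ and the length-$p_i$ prefix of $w_i$ sits inside $u$, the period $d$ lifts to all of $w_i$; and minimality of $\pi(w_i)$ together with $d\mid p_i$ forces $p_1=p_2=d$, so both roots equal the length-$d$ prefix of $u$. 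All of those outer steps check out, and you correctly identify that the conclusion $r(w_1)=r(w_2)$ (rather than merely ``$u$ is highly periodic'') needs the lifting and minimality steps.

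The one genuine soft spot is inside your inductive proof of the periodicity lemma, exactly where you anticipated. The chain $u[i]=u[i+p_1]=u[i+p_1-p_2]$ is valid only for $1\le i\le |u|-p_1$, so it establishes the period $p_1-p_2$ only on the prefix of $u$ of length $|u|-p_2$, whereas the assertion ``$u$ has period $p_1-p_2$'' requires the equality for all $i\le |u|-(p_1-p_2)$, a strictly larger range. As written, you then apply the induction hypothesis to $u$ with periods $p_1-p_2$ and $p_2$, which you have not actually justified. The standard repair is to apply the induction hypothesis to that shorter prefix (its length $|u|-p_2\ge (p_1-p_2)+p_2-d$ still meets the hypothesis, and one checks $|u|-p_2\ge p_2$ because $p_1-p_2\ge d$), conclude that the prefix has period $d$, and then propagate $d$ from the prefix to all of $u$ using the period $p_2$ together with $d\mid p_2$ --- the same propagation device you already use in your lifting step. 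With that adjustment the proof is complete.
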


\begin{lemma} \label{lemma3}
For any $w \in \Sigma^+$, if there exists a factorization $w = xyz$ such
that $xy = yz$ and $x, y, z \in \Sigma^+$, then $w$ is periodic with
$\pi(w) \leq |x|$.
\end{lemma}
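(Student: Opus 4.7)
The plan is to show directly that $|x|$ is itself a period of $w$, from which $\pi(w) \leq |x|$ is immediate. The argument should proceed by first extracting length information: taking lengths in $xy = yz$ gives $|x| = |z|$, so setting $p = |x|$ we have $|w| - p = |y| + |x| = |y| + |z|$.

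Next I would use the standard equivalent formulation of a period: $p$ is a period of $w$ iff the length-$(|w|-p)$ prefix of $w$ equals its length-$(|w|-p)$ suffix. For $w = xyz$ with $p = |x|$, the length-$(|w|-p)$ prefix is precisely $xy$ and the length-$(|w|-p)$ suffix is precisely $yz$. The hypothesis $xy = yz$ then gives that $p$ is a period of $w$, and since $x$ is nonempty this period is positive, so $\pi(w) \leq |x|$.

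There is no real obstacle here; the lemma is essentially a restatement of the definition of period once one notices that $xy$ and $yz$ are the natural overlapping prefix/suffix of $w = xyz$ of length $|w| - |x|$. The only small care needed is to invoke that $x, y, z \in \Sigma^+$ guarantees $p \geq 1$ and that the overlap region $y$ is nonempty (which is not needed for the period argument but makes the factorization meaningful). No appeal to Fine--Wilf or to the Lyndon--Schützenberger conjugacy theorem is required for this direction.
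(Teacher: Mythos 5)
Your proof is correct, but it takes a different route from the paper. You observe that $|xy| = |yz|$ forces $|x| = |z|$, and then with $p = |x|$ the equality $xy = yz$ says exactly that the length-$(|w|-p)$ prefix of $w = xyz$ coincides with its length-$(|w|-p)$ suffix, i.e.\ $w[i] = w[i+p]$ for all $1 \leq i \leq |w|-p$; since $x$ is nonempty, $p \geq 1$ and $p < |w|$, so $p$ is a period and $\pi(w) \leq |x|$. The paper instead invokes the Lyndon--Sch\"utzenberger theorem to write $x = uv$, $y = (uv)^e u$, $z = vu$, so that $w = (uv)^{e+2}u$, which exhibits the factorization $w = s^k s'$ with $s = uv$ of length $|x|$ demanded by the paper's definition of period. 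Your argument is more elementary --- no appeal to conjugacy results is needed --- at the small cost of passing through the standard equivalence between the paper's factorization definition of a period and the prefix-equals-suffix (equivalently $w[i]=w[i+p]$) characterization; that equivalence is routine, but since the paper defines periods via $w = s^k s'$ with $s' \in \pref(s)$ and $k \geq 1$, it is worth one line to note that $p = |x| < |w|$ guarantees $k \geq 1$ there. The paper's route buys the explicit structural form $w = (uv)^{e+2}u$ of the word, which is not needed for the lemma but makes the conclusion immediate under its definition.
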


\begin{proof}
By the Lyndon-Sch\"{u}tzenberger theorem~\cite[Lemma 2]{lyndon1962equation}, there exist $u \in \Sigma^+, v \in \Sigma^*$ and an integer $e \geq 0$ such that $x = uv, y = (uv)^eu, z = vu$. Thus $w = (uv)^{e + 2}u$. Thus $w$ is periodic with $\pi(w) \leq |x|$.
\qed
\end{proof}

\begin{lemma} \label{lemma4}
Let $t, n, k$ be integers such that $t = n + k$, $n \geq k + 1$, and $k
\geq 0$. Let $w$ be a word of length $t$ with $\pi(w) \leq k + 1$. If
$w'$ is any word such that $F_n(w) = F_n(w')$, then $w$ and $w'$ are
root-conjugate.
\end{lemma}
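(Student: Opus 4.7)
The plan is to transfer the periodic structure from $w$ to $w'$ and then identify $r(w')$ as a cyclic rotation of $s := r(w)$. Set $p = \pi(w)$; by hypothesis $p \leq k+1 \leq n$, and $s$ is primitive by the minimality of $p$. First I would note that every length-$n$ factor of $w$ has period $p$ (being a factor of length $\geq p$ of a word with period $p$), so from $F_n(w) = F_n(w')$ every length-$n$ factor of $w'$ also has period $p$.

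Next I would show that $w'$ itself has period $p$. In the main case $p \leq n-1$, for any $i$ with $1 \leq i \leq |w'| - p$ one can always find a length-$n$ factor of $w'$ containing both positions $i$ and $i+p$, and periodicity of that factor gives $w'[i] = w'[i+p]$. The boundary case $p = n$ forces $k+1 = n$ and $t = 2n-1$; here $F_n(w)$ consists of the $n$ distinct cyclic rotations of $s$, and since consecutive length-$n$ factors of $w'$ overlap in $n - 1$ symbols, and two distinct rotations of a primitive word cannot agree on $n-1$ consecutive positions, consecutive factors of $w'$ must be successive rotations of $s$. This forces $w' = r \cdot r[1..n-1]$ for some rotation $r$ of $s$, so again $w'$ has period $p$.

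Once $w'$ has period $p$, the equality $\pi(w') = p$ follows: if $q := \pi(w') < p$, then every length-$n$ factor of $w'$ (hence of $w$) has period $q$, and since $q \leq n - 1$ the same overlap argument applied to $w$ yields that $w$ has period $q < p$, contradicting the minimality of $p$. Finally, since $w'[1..n] \in F_n(w)$, write $w'[1..n] = w[a..a+n-1]$ for some $a \in \{1,\ldots,k+1\}$; restricting to the first $p$ symbols gives $r(w') = w'[1..p] = w[a..a+p-1]$, which is a cyclic rotation of $s$ because $w$ has period $p$ with root $s$. The main obstacle is the boundary case $p = n$: there the local statement "each length-$n$ factor has period $p$" is vacuous, so periodicity of $w'$ cannot be deduced from a single factor, and one must instead exploit the global structure of $F_n(w)$ as the full set of cyclic rotations of the primitive word $s$ together with the non-overlap property of rotations of a primitive word.
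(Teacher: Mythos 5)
Your proof is correct, but it takes a genuinely different route from the paper. The paper does not argue directly: it checks that $w$ is semiperiodic (using Carpi's bound $\pi(w) > R_w$ and the observation $H_w \geq \pi(w)$) and then invokes Proposition 6.2 of Carpi and de Luca on semiperiodic words, which gives root-conjugacy immediately; the cost is reliance on that external machinery, the benefit is a three-line proof. Your argument is elementary and self-contained: you transfer the period $p=\pi(w)$ to $w'$ locally, by covering each pair of positions $i, i+p$ with a length-$n$ window when $p \leq n-1$, treat the extremal case $p = n$ (forcing $t = 2n-1$ and $F_n(w)$ equal to the set of rotations of the primitive root $s$) via the successive-rotation argument, rule out $\pi(w') < p$ by running the same window argument backwards on $w$, and finally read off $r(w')$ as a factor $w[a..a+p-1]$, hence a conjugate of $s$. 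This has the added merit of working verbatim for $w'$ of arbitrary length (as the lemma states) and of delivering $\pi(w')=\pi(w)$ explicitly, which the theorem also needs. Two small polish points: the one fact you assert without proof --- that distinct conjugates of a primitive word of length $n$ are already distinguished by their first $n-1$ letters --- is true but deserves a line (if two distinct rotations shared a length-$(n-1)$ prefix, the circular word would satisfy $c[x]=c[x+d]$ for all but one position $x$, and chaining these equalities around each orbit of $x \mapsto x+d$ forces period $\gcd(d,n)<n$, contradicting primitivity); and in the boundary case the specific form $w' = r\, r[1..n-1]$ presumes $|w'| = 2n-1$, whereas all you need (and all your argument in fact gives, for any length) is that $w'$ has period $n$.
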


Carpi and de Luca proved a stronger proposition~\cite[Proposition
6.2]{carpi2003semiperiodic} which directly implies this lemma. We first
introduce some relevant notation from that paper.

A factor $s$ of a word $w$ is said to be \textit{right-special} in $w$
if there exist two distinct symbols $a$ and $b$ such that $sa$ and $sb$
are factors of $w$.  Let $R_w$ denote the minimal length $m$ such that
there exists no factor of length $m$ that is right-special.

A factor $s$ of a word $w$ is said to be \textit{right-extendable}
(resp., \textit{left-extendable}) in $w$ if there exists a symbol $a$
such that $sa$ is a factor of $w$ (resp., $as$ is a factor of $w$). Let
$K_w$ and $H_w$ denote the length of the shortest factor which is not
right-extendable (resp., left-extendable).

A word is \textit{semiperiodic} if $R_w < H_w$.

\begin{proof}[of Lemma~\ref{lemma4}]
Carpi proved \cite[Lemma 3.2]{carpi2003semiperiodic} that $\pi(w) > R_w$. Also, we have $H_w \geq \pi(w)$ since the length-($\pi(w) - 1$) prefix of $w$ is left-extendable. Thus $w$ is semiperiodic. Moreover we have $F_n(w) = F_n(w')$ where $n \geq k + 1 \geq \pi(w) \geq 1 + R_w$. Then we can apply \cite[Proposition 6.2]{carpi2003semiperiodic} to prove this lemma.
\qed
\end{proof}

\begin{proof}[of Theorem~\ref{strong_pro}]
We give a proof for Theorem~\ref{strong_pro} by induction on $k$.

The base case is when $k = 0$. In this case $t = n$ and thus $F_n(w) = \{w\}$ and $F_n(w') = \{w'\}$. Thus $w = w'$.

Now we deal with the induction step. We assume the result holds for $k - 1$ and we prove it for $k$. For convenience, we let $p_i(w)$ denote the length-$i$ prefix of the word $w$; let $s_i(w)$ denote the length-$i$ suffix of the word $w$.

We first consider the case where $H_w < n$. We have $p_n(w) \in F_n(w) = F_n(w')$. If $p_n(w) \neq p_n(w')$, then there exists $a \in \Sigma$ such that $ap_{n - 1}(w) \in F_n(w')$. Thus we have $ap_{n - 1}(w) \in F_n(w)$ which leads to the contradiction that $H_w \geq |ap_{n - 1}(w)| = t$. Hence $p_n(w) = p_n(w')$.

Now let $s = w[2 .. t]$ and $s' = w'[2 .. t]$. Clearly $|s| = |s'| = t
- 1$. The prefix $p_n(w)$ appears only once as a factor of $w$,
otherwise $p_{n - 1}(w)$ is left-extendable in $w$ which contradicts
the fact that $H_w < n$. Thus we have $F_n(s) = F_n(w) \backslash
\{p_n(w)\}$. Similarly we have $F_n(s') = F_n(w') \backslash
\{p_n(w)\}$. Thus $F_n(s) = F_n(s')$. Let $k' = k - 1$. We have $t - 1
= n + k - 1 = n + k'$ and $p \geq k + 1 > k' + 1$. By induction, we
have either
\begin{enumerate}
\setlength{\itemindent}{20pt}
\setlength{\itemsep}{1pt}
\item[Case 1:]{
 $s = s'$; or
}
\item[Case 2:]{
 $s$ and $s'$ are root-conjugate and $\pi(s) = \pi(s') = \rho$, where $\rho \leq k' + 1 = k$.
}
\end{enumerate}
In Case 1, it follows that $w = w'$, contradicting the fact that $w,
w'$ are distinct. In Case 2, we prove that $s = s'$ by showing that
their roots are identical. Suppose $s$ and $s'$ have a common prefix of
length $d$. We have $d \geq n - 1$, since $w$ and $w'$ have a common
prefix of length at least $n$. If $d \geq \rho$, then the root of $s$
is identical to the root of $s'$. Otherwise, we have the chain of
inequalities $k \geq \rho \geq d + 1 \geq n \geq k + 1$, which is
trivially a contradiction. Thus neither Case 1 nor Case 2 can occur and
we are done with the case where $H_w < n$.

Similarly we can prove the induction step when $K_w < n$. Thus it suffices to consider the case where $H_w \geq n$ and $K_w \geq n$. We first claim $\pi(w) \leq k + 1$. There are several cases to settle:
\begin{itemize}
\item The first case is when $p_{n - 1}(w) = s_{n - 1}(w)$ and the
occurrence of $p_{n - 1}(w)$ and $s_{n - 1}(w)$ do not overlap; namely
we have $w = p_{n - 1}(w)Lp_{n - 1}(w)$, where $L \in \Sigma^*$. We
have the inequality $n + k = t = |w| = 2|p_{n - 1}(w)| + |L| = 2(n - 1)
+ |L|$. Thus $|L| = k + 2 - n$. Hence $\pi(w) \leq |p_{n - 1}(w)L| = n
- 1 + k + 2 - n = k + 1$.

\medskip

\item The second case is when $p_{n - 1}(w) = s_{n - 1}(w)$ and these
occurrences overlap. Formally we put it as follows: there exist $x, y,
z \in \Sigma^+$, such that $p_{n - 1}(w) = xy = yz$ and $w = xyz$. It
follows that $\pi(w) \leq |x| \leq k + 1$ by Lemma~\ref{lemma3}.

\medskip

\item The last case is when $p_{n - 1}(w) \neq s_{n - 1}(w)$. Let $i_p$
denote the index of the last occurrence of $p_{n - 1}(w)$; namely $i_p
= \sup\{i \geq 0: p_{n - 1}(w) = w[i .. i + n - 2]\}$. Note that $i_p >
0$ since $p_{n - 1}(w)$ is left-extendable and $i_p \leq t - n + 2$
since $p_{n - 1}(w) \neq s_{n - 1}(w)$. Thus, the first occurrence of
$p_{n - 1}(w)$ (the prefix of $w$) overlaps the last occurrence of
$p_{n - 1}(w)$. By Lemma~\ref{lemma3}, we get that $w_1 = w[1 .. i_p +
n - 2]$ is periodic with $\pi(w_1) \leq i_p - 1$. Similarly we let
$i_q$ denote the index of the first occurrence of $s_{n - 1}(w)$ and
$w_2 = w[i_q..t]$. We have $0 < i_q \leq t - n + 2$ and $\pi(w_2) \leq
t - n + 2 - i_q$. The factors $w_1$ and $w_2$ overlap for at least
$|w_1| + |w_2| - t \geq \pi(w_1) + \pi(w_2) - 1$ symbols. Let $D$
denote the overlap of $w_1$ and $w_2$. We have $|D| \geq \pi(w_1) +
\pi(w_2) - 1$. Also $\pi(w_1)$ is a period of $D$ since $|D| \geq
\pi(w_1)$ and $D$ can be factorized as
$$D = d^ld', \text{ where } d
\text{ is conjugate to the root of }w_1, \text{ }d' \in \pref(d),
\text{ and } l \geq 1.$$
By Lemma~\ref{lemma2}, the overlap $D$ has the
same root as $w_2$. Since root-conjugacy is an equivalence relation, we
have $w_1$ and $w_2$ are root-conjugate. It follows that $w$
is periodic with $\pi(w) = \pi(w_1) \leq k + 1$.

\end{itemize}

Finally by Lemma~\ref{lemma4}, we get that $w$ and $w'$ are
root-conjugate and their periods $\pi(w) = \pi(w') \leq k + 1$.  By all
cases, we finish the induction and complete the proof of
Theorem~\ref{strong_pro}.
\qed
\end{proof}

The following corollary gives $T(t, n)$ when $t < 2n$.

\begin{corollary}
For $n \leq t < 2n$, we have $T(t, n) = 2^t - \sum\limits_{k = 1}^{t - n + 1}\frac{k - 1}{k}\sum\limits_{d|k}\mu(\frac{k}{d})2^d$, where $\mu(\cdot)$ is the M\"{o}bius function.
\label{tcor}
\end{corollary}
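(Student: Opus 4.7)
The plan is to combine the identity
$$T(t, n) = 2^t - \sum_{S \in \Sigma^n,\; C_t(S) > 1}\bigl(C_t(S) - 1\bigr)$$
given above with Theorem~\ref{strong_pro} in order to reduce the problem to counting conjugacy classes of primitive words. The key observation is that Theorem~\ref{strong_pro} tells us exactly when two distinct length-$t$ words produce the same set of length-$n$ factors: they must share a common minimal period $p \leq k+1 = t - n + 1$ and be root-conjugate. Thus the equivalence classes of length-$t$ words under the relation $w \sim w'$ iff $F_n(w) = F_n(w')$ are either singletons (contributing $0$ to the sum) or root-conjugacy classes of words whose common minimal period $p$ lies in the range $1 \leq p \leq t - n + 1$.

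Next, I would count the contribution from each period $p$. A word $w$ of length $t$ has minimal period exactly $p$ if and only if its length-$p$ prefix (its root) is primitive, and once we fix the primitive root $r$ and the length $t$, the word $w$ is completely determined as $w = r^{\lfloor t/p \rfloor}r'$ with $r'$ the appropriate prefix of $r$. By the standard M\"{o}bius-inversion formula, the number of primitive binary words of length $p$ is $\sum_{d \mid p}\mu(p/d)\,2^d$. Since the $p$ conjugates of a primitive word of length $p$ are all distinct, these primitive words split into root-conjugacy classes of size exactly $p$, giving
$$\frac{1}{p}\sum_{d\mid p}\mu(p/d)\,2^d$$
classes for each $p$. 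Each such class corresponds to one set $S \in \Sigma^n$ witnessed by exactly $p$ distinct words of length $t$, contributing $p-1$ to the sum $\sum_S (C_t(S) - 1)$.

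Finally, summing the contributions over all admissible periods $1 \leq p \leq t - n + 1$ yields
$$\sum_{S \in \Sigma^n,\; C_t(S) > 1}\bigl(C_t(S) - 1\bigr) = \sum_{p=1}^{t-n+1}\frac{p-1}{p}\sum_{d\mid p}\mu(p/d)\,2^d,$$
where the term at $p = 1$ vanishes harmlessly due to the factor $\frac{p-1}{p}$. Substituting into the displayed identity for $T(t,n)$ gives exactly the claimed formula (after relabelling $p$ as $k$). There is no serious obstacle to overcome: Theorem~\ref{strong_pro} does all the heavy combinatorial work, and the remainder is a straightforward application of the primitive-word count. The only point that requires a line of verification is that within a root-conjugacy class of length-$t$ words with the same minimal period $p \leq t-n+1$, no two members coincide, so that the class really has size $p$; this is immediate from the primitivity of the root.
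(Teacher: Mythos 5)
Your argument is correct and follows essentially the same route as the paper: start from the identity $T(t,n) = 2^t - \sum_{C_t(S)>1}(C_t(S)-1)$, use Theorem~\ref{strong_pro} to identify the nontrivial classes with root-conjugacy classes of words of minimal period $p \leq t-n+1$, and count those classes by primitive roots. Counting conjugacy classes of primitive words of length $p$ via M\"obius inversion is exactly the paper's count of Lyndon words $L(p) = \frac{1}{p}\sum_{d\mid p}\mu(p/d)2^d$, so the two proofs coincide in substance.
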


\begin{proof}
Let $k = t - n$. We have $n \geq t - n + 1 = k + 1$. By
Theorem~\ref{strong_pro}, we know that for any set $S \subseteq
\Sigma^n$, $C_t(S) > 1$ if and only if there exists a word $w$ that
witnesses $S$ with $\pi(w) \leq k + 1$. In this case we have $C_t(S) =
\pi(w)$; that is, the set of words that witness $S$ is the same as the
set of the words that are root-conjugate to $w$. Thus each $S$ such
that $C_t(S) > 1$ corresponds to a set of root-conjugate words, which
can be represented by their lexicographically least roots (the Lyndon
words).

Thus we have
\begin{equation*}
\begin{aligned}
T(t, n) & = 2^t - \sum\limits_{{S \in \Sigma^n} \atop { C_t(S) > 1}} (C_t(S) - 1)
         = 2^t - \sum\limits_{ {\text{$w$ is a Lyndon word} \atop
{ \pi(w) \leq k + 1}}} (\pi(w) - 1)\\
        & = 2^t - \sum_{i = 1}^{k + 1} (i - 1)\cdot L(i),
\end{aligned}
\end{equation*}
where $k = t-n$ and $L(i) = \frac{1}{i}\sum\limits_{d|i}\mu(\frac{i}{d})2^d$ is the number of Lyndon words of length $i$.
\qed
\end{proof}

\begin{example}
To finish this section, we give a table listing some numerical results
for $T(t, n)$.  The numbers in bold follow from Corollary~\ref{tcor}.

\vspace{2mm}
\begin{center}
\begin{tabular}{|c|r|r|r|r|r|r|r|r|r|r|r|r|r|r|r|r|}
  \hline
  \backslashbox{$n$}{$t$} & $1$ & $2$ & $3$ & $4$ & $5$ & $6$ & $7$ & $8$ & $9$ & $10$ & $11$ & $12$ & $13$ & $14$ & $15$ & $16$\\
  \hline
  $1$ &{\bf 2} & 3 & 3 & 3 &3 & 3 & 3 & 3 & 3 & 3 & 3 & 3 & 3  & 3 & 3 & 3 \\
  \hline
  $2$ &&{\bf 4}&{\bf 7}&11&12&12&12&12&12&12&12&12&12 & 12 & 12 & 12 \\
  \hline
  $3$ && &{\bf 8}&{\bf 15}&{\bf 27}&48&72&94&100&103&101&103&101 & 103 & 101 & 103  \\
  \hline
  $4$ && & &{\bf 16}&{\bf 31}&{\bf 59}&{\bf 114}&216&391&677&1087&1621&2246 & 2928 & 3595 & 4235  \\
  \hline
  $5$ && & & &{\bf 32}&{\bf 63}&{\bf 123}&{\bf 242}&{\bf 474}&933&1795&3421&6399  & 11682 & 20704 & 35914  \\
  \hline
  $6$ && & & & &{\bf 64}&{\bf 127}&{\bf 251}&{\bf 498}&{\bf 986}&{\bf 1965} &3899&7709 & 15171 & 29710 & 57726 \\
  \hline
  $7$ && & & & & & {\bf 128} & {\bf 255} & {\bf 507} & {\bf 1010} & {\bf 2010} & {\bf 4013} & {\bf 8001} & 15969 & 31789 & 63256 \\
  \hline
  $8$ &&&&&&&& {\bf 256} & {\bf 511} & {\bf 1019} & {\bf 2034} & {\bf 4058} &
  {\bf 8109} & {\bf 16193} & {\bf 32367} & 64671 \\
  \hline
\end{tabular}
\end{center}
\end{example}

\section{Open Problems and Future Work}

\begin{enumerate}

\item In Section~\ref{sec:bounds}, we gave lower and upper bounds on
$|\cip_n|$, both of the form $\alpha^{2^n}$. Does the limit
$\lim\limits_{n \rightarrow \infty} |\cip_n|^{\frac{1}{2^n}}$ exist?

\item Find better bounds for $\mu_n$ and $\nu_n$.  For example, is
$\mu_n \leq (n-1)2^n$ for $n \geq 2$?

\item It is easy to see that
Theorem~\ref{strong_pro} fails for $t < k+1$.  Indeed, it is possible
to have $F_n (x) = F_n (y)$ in this case, and yet $\pi(x) \not= \pi(y)$.
For example, take $n = k-1$ so that $t = 2k-1$, and consider
$x = 0^k 1 0^{k-2}$ and $y = 0^{k-1} 1 0^{k-1}$.
Then $F_n (x) = F_n (y)$ but $\pi(x) = k+1$ and $\pi(y) = k$.

The remaining case is $n = k$, so that $t = 2k$.
We conjecture that if $x$ and $y$ are distinct binary words
of length $2n$ with $F_n (x) = F_n(y)$ then $\pi(x) = \pi(y)$ and
furthermore $x$ and $y$ are root-conjugate.  However, it is possible
in this case that $\pi(x) > n+1$.
Furthermore it seems that if $\pi(x) > n+1$, then
$x = u v 0 1 v^R u$ and $y = u v 1 0 v^R u$ (or vice versa)
for some nonempty words $u, v$ where $u$ is a palindrome and
$\pi(x) = n + |u|$.

As an example,
consider $x = 010110$, $y = 011010$.  Then
$F_3 (x) = F_3 (y) = \lbrace 010, 011, 101, 110 \rbrace$
but $\pi(x) = \pi(y) = 5$.    Here $u = 0$, $v = 1$.

\end{enumerate}

\end{document}